\newtheorem{theorem}{Theorem}
\newtheorem{conjecture}[theorem]{Conjecture}
\newtheorem{lemma}{Lemma}
\newtheorem{definition}[theorem]{Definition}
\def\FullBox{\hbox{\vrule width 8pt height 8pt depth 0pt}}
\newcommand{\myqed}{\;\;\;\FullBox}
\newcommand{\bbE}{\mathbb{E}}
\newcommand{\bbN}{\mathbb{N}}
\newenvironment{proof}{\noindent{\bf Proof:~~}}{\(\myqed\)}
\newenvironment{proof-of-lemma}{{\em Proof of Lemma:~~}}{\(\myqed\)}
\newenvironment{proof-of-lemma-noqed}{{\em Proof of Lemma:~~}}{\(\)}
\title{The Price of Uncertain Priors in Source Coding\thanks{To appear in IEEE
Transactions on Information Theory. This paper was 
presented in part at Allerton 2015. M.~Braverman is partially supported by NSF 
Award CCF-1525342, NSF CAREER award CCF-1149888, a Packard Fellowship in Science
and Engineering, and the Simons Collaboration on Algorithms and Geometry.
B.~Juba was partially supported by ONR grant number N000141210358, an AFOSR 
Young Investigator Award, and NSF award CCF-1718380.}}
\author{Mark Braverman\thanks{M.~Braverman is with the Department of Computer
Science, Princeton University, Princeton, NJ 08540, USA (E-mail: mbraverm@cs.princeton.edu)}
\and Brendan Juba\thanks{B.~Juba is with the Department of Computer Science and
Engineering, Washington University, St.~Louis, MO 63130, USA (E-mail: 
bjuba@wustl.edu). Portions of this work were performed while B. Juba was 
affiliated with the School of Engineering and Applied Science, Harvard 
University, Cambridge, MA 02138, USA.}
\thanks{\copyright~2017 IEEE. Personal use of this material is permitted. However, permission to use this material for any other purposes must be obtained from the IEEE by sending a request to pubs-permissions@ieee.org.}}
\begin{document}
\maketitle

\begin{abstract}
We consider the problem of one-way communication when the recipient does not
know exactly the distribution that the messages are drawn from, but has a {\em
``prior''} distribution that is known to be close to the source distribution, a 
problem first considered by Juba et al. We consider the question 
of how much longer the messages need to be in order to cope with the uncertainty
about the receiver's prior and the source distribution, respectively, as 
compared to the standard source coding problem. We consider two variants of this
uncertain priors problem: the original setting of Juba et al. in 
which the receiver is required to correctly recover the message with 
probability~1, and a setting introduced by Haramaty and Sudan, in 
which the receiver is permitted to fail with some probability $\epsilon$. In 
both settings, we obtain lower bounds that are tight up to logarithmically
smaller terms. In the latter setting, we furthermore present a variant of the 
coding scheme of Juba et al. with an overhead of $\log\alpha+\log 1/\epsilon+1$ 
bits, thus also establishing the nearly tight upper bound.
\end{abstract}

\section{Introduction}
In a seminal work, Shannon~\cite{shannon48} considered the 
problem of how to encode a message so that it can be transmitted and decoded 
reliably across a channel that introduces errors. Shannon's contribution in that
work was two-fold: first, he identified how large any encoding of messages would
need to be in the absence of noise -- the {\em ``source coding''} problem -- and
then identified the optimal length for encodings that can be decoded in spite of
errors introduced by the channel. The difference between these two lengths --
the number of extra, {\em redundant} bits from the standpoint of source coding 
-- may be viewed as the {\em ``price'' of noise-resilience}. 

Such work in information theory has all but settled the basic quantitative 
questions of noise-tolerant transmission in telecommunications. In natural 
communication, however, errors frequently arise not due to any kind of 
interference, but instead due to a lack of shared context. In the interest of 
understanding why natural language is structured so that such errors can occur 
and how they might be addressed, Juba et al.~\cite{jkks11} introduced a model of
{\em communication with uncertain priors.} This is a variant of the source 
coding problem in which the sender and receiver do not agree on the source 
distribution that the messages are drawn from. Thus, errors arise because the 
sender and receiver do not agree on which messages should be considered more 
likely, and should therefore receive shorter codewords so as to minimize the 
expected encoding length. We note that this problem also has applications to 
adaptive data compression, in which parties use their empirical observations of 
message frequencies to encode messages. This would be useful since the 
distribution over messages generally changes over time for a variety of reasons;
this clearly occurs in natural language content, for example, as new words are 
introduced and old ones fall out of use. Since different parties on a network 
will in general observe different empirical distributions of messages, they must
tolerate some (limited) inconsistency about the relative frequency of the 
different messages.

In the model of Juba et al., ``uncertainty'' about the priors is captured by the
following kind of distance between the priors used by the sender and receiver. 
Namely, if the sender has a source distribution $P$ and the receiver expects a 
distribution $Q$, then we say that $P$ and $Q$ are {\em $\alpha$-close} when 
$\frac{1}{\alpha}Q(x)\leq P(x)\leq \alpha Q(x)$ for every message $x$. Juba et 
al. then presented a scheme (building on the coding technique of Braverman and
Rao~\cite{br14}) in which every source $P$ can be encoded by $H(P)+2\log\alpha+
O(1)$ bits so that every decoder using an $\alpha$-close prior $Q$ will recover 
the message correctly. Thus, the Juba et al. scheme uses $2\log\alpha+O(1)$ bits
beyond the $H(P)$ bits achieved by standard solutions to the basic source coding
problem. 

Juba et al. had noted that $H(P)+\log\alpha$ bits is necessary for such 
``uncertain priors'' coding (at least for a prefix-coding variant), and asked 
whether the redundancy could be reduced to this $\log\alpha$ lower bound. In 
this work, we address this question by showing that it cannot. Indeed, in the 
original error-free coding setting, we show that the redundancy must be at least
$2\log\alpha$ up to terms of size $O(\log\log\alpha)$, and hence the {\em 
``price of uncertainty''} is, up to lower-order terms, an additional 
$2\log\alpha$ bits. We also consider the variant of the problem introduced by 
Haramaty and Sudan~\cite{hs16} in which the decoding is allowed to fail with 
some positive probability $\epsilon$. We also nearly identify the price of 
uncertainty in this setting: we note that the scheme of Juba et 
al./Braverman-Rao can be modified to give an uncertain priors coding of length 
$H(P)+\log\alpha+\log 1/\epsilon+1$, and show a lower bound on the redundancy of
$\log\alpha+\log 1/\epsilon-O(\log\log\alpha)$ when $\epsilon\geq 1/\alpha$. The
price of uncertainty $\alpha$ when error $\epsilon>1/\alpha$ is allowed is thus 
essentially reduced to $\log\alpha+\log 1/\epsilon$ bits.

We obtain our results by reducing a one-way communication complexity problem to 
uncertain prior coding: Consider the problem where Alice receives as input a 
message from a domain of size roughly $\alpha^2$, Bob receives as input a set 
$S$ of size roughly $\alpha$ containing Alice's message, and Alice's task is to
send the message to Bob. For our reduction, we note that there is a low-entropy 
distribution $P$ with most of its mass on Alice's input and an $\alpha$-close 
family of distributions corresponding to the sets $S$ that essentially capture 
this problem. Thus, a lower bound for the communication complexity problem 
translates directly to the desired lower bound on the redundancy since the 
entropy of $P$ is negligible compared to the overhead. Our lower bounds for
the two variants, error-free and positive error, of the uncertain priors coding 
problem are then obtained from lower bounds for this same problem in the 
analogous variant of the one-way communication complexity model. Specifically, 
we obtain a $2\log\alpha-O(\log\log\alpha)$ lower bound in the error-free model 
and a $\log\alpha+\log 1/\epsilon-O(\log\log\alpha)$ lower bound in the model 
with $\epsilon$ error, yielding our main results.

\subsection{Aside: why not the relative entropy/KL-divergence?}
A common misconception upon first learning about the model of Juba et al.~\cite
{jkks11} is that (a) the problem had already been solved and (b) the correct
overhead is given by the {\em relative entropy} (or ``{\em KL-divergence}''),
$RE(P\|Q)=\sum_xP(x)\log\frac{P(x)}{Q(x)}\leq\log\alpha$. Of course, our lower
bounds imply that this is incorrect, but it is useful to understand the reason.
Indeed, our problem is somewhat unusual in that the relative entropy is 
essentially the correct answer to a few similar problems, including (i) the 
problem where the {\em sender} does not know the source distribution $P$, only 
an approximation $Q$ and (ii) the problem where the communication is two-way 
(and the receiver can tell the sender when to stop)---this variant essentially 
follows from the work of Braverman and Rao~\cite{br14}. The difference is that 
in our setting unlike (i), the sender does not know $Q$ and unlike (ii), has no 
way to learn anything about it, apart from the fact that it is $\alpha$-close to
$P$. The sender's message must simultaneously address decoding by all possible 
$\alpha$-close priors using only this knowledge, hence the connection to a
worst-case communication complexity set-up.

We also note that the problem we consider is not addressed by the {\em universal
compression} schemes of Lempel and Ziv~\cite{lz77}. Lempel and Ziv's compression
schemes provide {\em asymptotically} good compression of many messages from, 
e.g., a Markovian source. By contrast, the problem we consider here concerns the
compression of a {\em single message.} The question of compressing multiple
messages is still interesting, and we will return to it in Section~\ref
{future-sec}.

\section{The Model and Prior State of the Art}

We now recall the model we consider in more detail and review the existing work
on this model. Our work concerns the uncertain priors coding problem, originally
introduced by Juba et al~\cite{jkks11}. In the following we will let $M$ denote
a set of {\em messages} and $\Delta(M)$ denote the probability distributions on
$M$. 
\begin{definition}[$\alpha$-close~\cite{jkks11}]
We will say that a pair of distributions $P,Q\in\Delta(M)$ are
{\em $\alpha$-close} for $\alpha\geq 1$ if for every $m\in M$,
\[
\frac{1}{\alpha}Q(m)\leq P(m)\leq \alpha Q(m).
\]
\end{definition}

In uncertain priors coding, one party ({\em ``Alice''}) wishes to send a message
drawn from a source distribution $P\in\Delta(M)$ using a one-way (noiseless) 
binary channel to another party ({\em ``Bob''}) who does not know $P$ exactly. 
Bob does know a distribution $Q\in\Delta(M)$, however, that is guaranteed to be 
$\alpha$-close to $P$, where Alice in particular knows $\alpha$. We assume that
Alice and Bob share access to an infinitely long common random string $R$. Our 
objective is to design an encoding scheme that, regardless of the pair of {\em 
``prior''} distributions $P$ and $Q$, enables Bob to successfully decode the 
message using as short a transmission as possible:
\begin{definition}[Error-free uncertain priors coding~\cite{jkks11}]
An {\em error-free uncertain priors coding scheme} is given by a pair of 
functions $E:M\times\bbN\times\{0,1\}^{\bbN}\times\Delta(M)\to\{0,1\}^*$ and 
$D:\{0,1\}^*\times\bbN\times\{0,1\}^{\bbN}\times\Delta(M)\to M$ such that for 
every $m\in M$, $\alpha\in\bbN$, $R\in\{0,1\}^{\bbN}$, $P\in\Delta(M)$, and 
$\alpha$-close $Q\in\Delta(M)$, if $c=E(m,\alpha,R,P)$, then $D(c,\alpha,R,Q)=
m$. When $R\in\{0,1\}^{\bbN}$ is chosen uniformly at random and $m$ is chosen
from $P$, we will refer to $\bbE_{m,R}[|E(m,\alpha,R,P)|]$ as the {\em encoding
length} of the scheme $(E,D)$ for $P$ and $\alpha$.
\end{definition}

The key quantity we focus on in this work is a measure of the overhead 
introduced by uncertain priors coding, as compared to standard source coding
where $P=Q$. 

\begin{definition}[Redundancy]
For $P\in\Delta(M)$, let $\ell(P)$ denote the optimal one-to-one encoding length
for $P$, i.e., if $C$ is the set of one-to-one maps from $M$ to the binary 
strings, $\ell(P)=\min_{c\in C}\bbE_{m}[|c(m)|]$ where the expectation is over
$m$ drawn from $P$ and $|c(m)|$ denotes the length of the encoding $c(m)$.
The {\em redundancy} of an uncertain priors coding scheme $(E,D)$ for a given
$\alpha\in\bbN$ is given by the maximum over $P\in\Delta(M)$ of the encoding 
length of $(E,D)$ for $\alpha$ and $P$ minus the optimal one-to-one encoding
length for $P$. That is, the redundancy is
\[
\max_{P\in\Delta(M)}\left\{\bbE_{m,R}[|E(m,\alpha,R,P)|]-\ell(P)\right\}
\]
where $m$ is drawn from $P$ and $R$ is chosen uniformly at random in the
expectation.
\end{definition}

As an uncertain priors coding scheme gives a means to perform standard source
coding (again, by taking $P=Q$, even when $\alpha=1$) it follows that the 
redundancy is always nonnegative.

Huffman codes~\cite{huffman52}, for example give a means to encode a message in 
this setting using on average no more than one more bit than the {\em entropy} 
of $P$, $H(P)=\sum_{m\in M}P(m)\log\frac{1}{P(m)}$. It is well known that this 
coding length is essentially optimal if we require $P$ to be a {\em prefix code}
(i.e., self-delimiting). But, it is possible to achieve a slightly better coding
length when the end of the message is marked for us. Elias~\cite{elias75} 
observed that the entropy is an upper bound for the cost of this problem as well
(which he credits to Wyner~\cite{wyner72}), whereas Alon and Orlitsky~\cite
{ao94} showed that the entropy of $P$ exceeds the optimal one-to-one encoding 
length $\ell(P)$ by at most $\log(\ell(P)+1)+\log e$. Furthermore, they observed
that this bound is approached by a geometric distribution as the parameter 
approaches zero (and it is thus essentially tight).

Juba et al.~\cite{jkks11}, using a coding technique introduced by Braverman and
Rao~\cite{br14}, exhibited an error-free uncertain priors coding scheme that 
achieved a coding length of the entropy plus a function of $\alpha$:
\begin{theorem}[Juba et al.~\cite{jkks11}]
\label{errorless-coding-thm}
There is an error-free uncertain priors coding scheme that achieves encoding
length at most $H(P)+2\log\alpha+2$.
\end{theorem}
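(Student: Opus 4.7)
The plan is to follow the Braverman--Rao hashing construction. Using the shared random string $R$, assign to each $m \in M$ an independent uniformly random infinite binary string $c_m \in \{0,1\}^{\infty}$. On input $(m, P, \alpha)$, Alice outputs $E(m,\alpha,R,P) = c_m[1\!:\!L]$, where $L$ is the smallest integer such that no $m' \neq m$ with $P(m') \geq P(m)/\alpha^2$ has $c_{m'}$ agreeing with $c_m$ on its first $L$ bits. On receipt of a length-$L$ string $\pi$, Bob outputs $D(\pi,\alpha,R,Q) = \arg\max_{m' \,:\, c_{m'}[1\!:\!L] = \pi} Q(m')$. Note that Alice can evaluate her stopping rule using only $P$ and $\alpha$, and that the encoding is a length-prefix of $c_m$ whose length is a function of $(m,R,P,\alpha)$.

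For correctness for every $R$ and every $\alpha$-close $Q$, the candidate set Bob considers certainly contains $m$, and by Alice's stopping rule every other candidate $m'$ satisfies $P(m') < P(m)/\alpha^2$. Applying $\alpha$-closeness twice gives $Q(m') \leq \alpha P(m') < P(m)/\alpha \leq Q(m)$, so $m$ is the strict maximizer and Bob recovers it. This is the step at which the gap blows up from $\alpha$ to $\alpha^2$: Alice must dominate the worst-case ratio between the true $Q$-masses, which can simultaneously pull $Q(m')$ up by $\alpha$ and push $Q(m)$ down by $\alpha$.

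For the length bound, fix $m$ and let $K = |\{m' : P(m') \geq P(m)/\alpha^2\}|$; since each such $m'$ contributes at least $P(m)/\alpha^2$ to the total mass, $K \leq \alpha^2/P(m)$. For each such $m'$, $\Pr_R[c_{m'}[1\!:\!\ell]=c_m[1\!:\!\ell]] = 2^{-\ell}$, so by a union bound $\Pr_R[L > \ell] \leq K \cdot 2^{-\ell}$. Summing $\bbE_R[L] = \sum_{\ell \geq 0}\Pr_R[L>\ell]$ and splitting at $\lceil\log K\rceil$ yields $\bbE_R[L] \leq \log K + O(1) \leq \log(1/P(m)) + 2\log\alpha + O(1)$, where careful bookkeeping on the ceiling and the geometric tail sum pins the $O(1)$ at $2$. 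Averaging over $m \sim P$ and using $\bbE_{m \sim P}[\log(1/P(m))] = H(P)$ gives $\bbE[|E|] \leq H(P) + 2\log\alpha + 2$.

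The main obstacle handled by this design is that Alice must commit to $L$ using only $P$ and $\alpha$, while correctness must hold for every $R$ and every $\alpha$-close $Q$ --- not merely with high probability. Alice's self-contained stopping rule, keyed to the inflated gap $\alpha^2$, is precisely what certifies deterministically (in $R$) that Bob's $Q$-maximizing decoder succeeds against every admissible $Q$; the factor-of-two price paid in the overhead is the cost of this simultaneous robustness against the two-sided slack between $P$ and $Q$.
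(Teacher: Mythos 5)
Your proposal is correct and follows essentially the same route the paper sketches for this theorem (the Juba et al./Braverman--Rao scheme): independent random strings $c_m$, a prefix just long enough to separate $m$ from every $m'$ with $P(m')\geq P(m)/\alpha^2$, and maximum-likelihood decoding under $Q$, with the same correctness chain $Q(m')\leq\alpha P(m')<P(m)/\alpha\leq Q(m)$ and the same $\bbE[L]\leq\log(\alpha^2/P(m))+O(1)$ length analysis. The only soft spot is the claim that the additive constant comes out to exactly $2$ --- the naive split of the tail sum at $\lceil\log K\rceil$ gives $3$, and one must optimize the split point to push it below $2$ --- but that is bookkeeping, not a gap.
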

Roughly, the scheme achieving this bound proceeds as follows. The shared random
string is interpreted as specifying, for each possible message $m$, an infinite
sequence of independent random bits. Each prefix of the string for $m$ is taken 
as a possible encoding of $m$: that is, to encode a message $m$, Alice computes
some sufficiently large index $i$, and transmits the first $i$ bits of the
common random string associated with $m$ to Bob. Specifically, if Bob's prior
is known to be $\alpha$-close, Alice chooses $i$ so that no other message $m'$ 
that shares an $i$-bit encoding with $m$ has prior probability $P(m)/\alpha^2$ 
or greater under $P$. It then follows immediately that since Bob's prior $Q$ is
$\alpha$-close to $P$, $m$ is the unique message of maximum likelihood under $Q$
with the given $i$-bit encoding. So, it suffices for Bob to simply output this
maximum-likelihood message.

Juba et al. also noted that a lower bound on the redundancy of $\log\alpha$ is 
easy to obtain (in a prefix-coding model). This follows essentially by taking a 
distribution over possible source distributions $P$ that are $\alpha$-close to a
common distribution $Q$, and noting that the resulting distribution over 
messages has entropy $H(P)+\log\alpha-o(1)$. Our first main theorem, in 
Section~\ref{errorless-lb-sec}, will improve this lower bound to $2\log\alpha-
O(\log\log\alpha)$, so that it nearly matches the upper bound given by 
Theorem~\ref{errorless-coding-thm} (up to lower-order terms).

We also consider a variant of the original uncertain priors coding problem,
introduced by Haramaty and Sudan~\cite{hs16}, in which we allow an error in
communication with some bounded but positive probability:
\begin{definition}[Positive-error uncertain priors coding]
For any $\epsilon>0$, an {\em $\epsilon$-error uncertain priors coding scheme} 
is given by a pair of functions $E:M\times\bbN\times\{0,1\}^{\bbN}\times
\Delta(M)\to\{0,1\}^*$ and $D:\{0,1\}^*\times\bbN\times\{0,1\}^{\bbN}\times
\Delta(M)\to M$ such that for every $\alpha\in\bbN$, $P\in\Delta(M)$, and 
$\alpha$-close $Q\in\Delta(M)$, when $m\in M$ is chosen according to $P$ and $R
\in\{0,1\}^{\bbN}$ is chosen uniformly at random,
\[
\Pr_{m,R}[D(E(m,\alpha,R,P),\alpha,R,Q) = m]\geq 1-\epsilon.
\]
Again, we will refer to $\bbE_{m,R}[|E(m,\alpha,R,P)|]$ as the {\em encoding 
length} of the scheme $(E,D)$ for $P$ and $\alpha$.
\end{definition}

\noindent
The redundancy for positive-error uncertain priors coding is then defined in
exactly the same way as for error-free coding.

We briefly note that the definition of Haramaty and Sudan~\cite{hs16} differs in
two basic ways. First, their definition does not include a common random string 
since they were primarily interested in deterministic coding schemes. Second,
they required that the decoder output a special symbol $\bot$ when it makes an
error. Our one positive result (Theorem~\ref{poserror-ub}) can be easily 
modified to satisfy this condition, but we prove our lower bound, Theorem~\ref
{poserror-lb}, for the slightly more lenient model stated here. 

We also briefly note that Canonne et al.~\cite{cgms14} have considered another 
variant of the basic model in which Alice and Bob do not share the common random
string $R$ perfectly, only correlated random strings. In both this imperfect 
randomness model and the deterministic model of Haramaty and Sudan, the known 
encoding schemes feature substantially greater redundancy than $2\log
\alpha$---the redundancy for these schemes is linear in the entropy and, in the 
case of the deterministic schemes of Haramaty and Sudan, furthermore depends on 
the size of $M$. It is an interesting open question for future work whether or
not lower bounds of this form can be proved for these other settings.

\section{The Price of Uncertainty}
We now establish lower bounds on the redundancy for uncertain priors coding
schemes, in both the error-free and positive error variants. We will see that 
both of these lower bounds are tight up to lower-order terms. Hence, at least
to the first order, we identify the ``price'' incurred by uncertainty about
a recipient's prior distribution, beyond what is inherently necessary for
successful communication in the absence of such uncertainty.

In both cases, our lower bounds are proved by exploiting the worst-case nature
of the guarantee over priors $P$ and $Q$ to embed a (worst-case) one-way
communication complexity problem into the uncertain priors coding problem. We
then essentially analyze both error-free and positive-error variants of the
communication complexity problem to obtain our lower bounds for the respective
uncertain priors coding problems.

\subsection{Lower bound for error-free communication}\label{errorless-lb-sec}

We first consider the original error-free variant of uncertain priors coding,
considered by Juba et al.~\cite{jkks11}. Earlier, in Theorem~\ref
{errorless-coding-thm}, we recalled that they gave an error-free uncertain
priors coding scheme with coding length $H(P)+2\log\alpha+2$. We now show that 
this is nearly tight, up to lower-order terms:

\begin{theorem}\label{errorless-lb}
For every error-free compression scheme for uncertain priors and every 
sufficiently large $\alpha$, there exists a pair of $\alpha$-close priors for 
which the scheme suffers redundancy at least $2\log\alpha-3\log\log\alpha-O(1)$.
\end{theorem}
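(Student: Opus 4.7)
The plan is to follow the roadmap suggested in the introduction: embed a one-way zero-error communication complexity problem into uncertain priors coding, and extract the redundancy lower bound from the injectivity forced on the encoder.

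\textbf{Construction of the hard family.} I would work on the message space $[N]$ with $N = \lceil \alpha^2 / \log^3\alpha\rceil$. For each $x \in [N]$, define
\[
P_x(x) = c_1, \qquad P_x(y) = c_2 \text{ for } y \in [N] \setminus \{x\},
\]
where $c_1 = \alpha^2 c_2$ and $c_2 = 1/(\alpha^2 + N - 1)$. The ratio $c_1/c_2 = \alpha^2$ is the maximum peaking compatible with $\alpha$-closeness; a direct calculation gives $c_1 = 1 - \Theta(1/\log^3\alpha)$ and $H(P_x) = O(1/\log^2\alpha) = o(1)$. For each size-$\alpha$ set $S \subseteq [N]$ and each $x \in S$, one can construct a $Q_S$ (taking value $c_1/\alpha$ on $S$ and the value forced by normalization off $S$) that is $\alpha$-close to $P_x$. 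The key structural property I would check is that for \emph{every} pair $x \neq x'$, there is a single distribution $Q$ simultaneously $\alpha$-close to both $P_x$ and $P_{x'}$: at the peaking ratio $\alpha^2$ the constraints $Q(z) \in [\max(P_x(z),P_{x'}(z))/\alpha,\, \alpha\min(P_x(z),P_{x'}(z))]$ pinch uniquely to $Q(x)=Q(x')=c_1/\alpha$ at $z \in \{x,x'\}$, and there is ample room on $[N]\setminus\{x,x'\}$ to complete this to a probability distribution.

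\textbf{Two injectivity properties.} For every shared random string $R$, the zero-error guarantee gives:
\begin{enumerate}
\item[(a)] For each fixed $x$, the map $m \mapsto E(m,\alpha,R,P_x)$ is injective on $[N]$: a collision would let $D(\cdot, \alpha, R, P_x)$ output two different messages (taking $Q = P_x$, which is trivially $\alpha$-close to itself).
\item[(b)] The diagonal map $x \mapsto E(x, \alpha, R, P_x)$ is injective on $[N]$: a collision $E(x, \alpha, R, P_x) = E(x', \alpha, R, P_{x'})$, combined with a common $\alpha$-close $Q$ as above, would force the single decoding $D(\cdot, \alpha, R, Q)$ to equal both $x$ and $x'$.
\end{enumerate}
Any injective map from $[N]$ to $\{0,1\}^*$ has total length at least $N\log N - O(N)$ (sum of the $N$ shortest distinct binary-string lengths).

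\textbf{Averaging.} Splitting $\bbE_{m \sim P_x}[|E(m, \alpha, R, P_x)|]$ at $m = x$ and summing over $x$ gives, for each $R$,
\[
\sum_{x \in [N]} \bbE_{m \sim P_x}\!\bigl[|E(m, \alpha, R, P_x)|\bigr] = (c_1 - c_2)\sum_x |E(x, \alpha, R, P_x)| + c_2 \sum_x \sum_m |E(m, \alpha, R, P_x)|.
\]
Applying (b) to the diagonal sum and (a) to the inner sums, the right-hand side is at least $(c_1-c_2)(N\log N - O(N)) + c_2(N^2 \log N - O(N^2))$, which, using the normalization $c_1 + (N-1)c_2 = 1$, telescopes to $N\log N - O(N)$. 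Dividing by $N$ and averaging over $R$ yields $\bbE_x[L(P_x)] \geq \log N - O(1)$, so some $x^\star$ satisfies $L(P_{x^\star}) \geq \log N - O(1) = 2\log\alpha - 3\log\log\alpha - O(1)$. Combined with $\ell(P_{x^\star}) \leq H(P_{x^\star}) = o(1)$ (Alon--Orlitsky), the redundancy at $P_{x^\star}$ is at least $2\log\alpha - 3\log\log\alpha - O(1)$, as claimed.

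\textbf{Main obstacle.} The delicate balance is in Step~1: to make $H(P_x)$ truly lower order, one needs $c_1 \to 1$, which forces $N \ll \alpha^2$, yet the bound we extract is only $\log N$. Choosing $N = \alpha^2/\log^3\alpha$ is the sweet spot that makes $H(P_x) = o(1)$ while costing only the $3\log\log\alpha$ slack in $\log N$. Simultaneously, the peaking ratio $c_1/c_2 = \alpha^2$ is exactly the threshold at which the common $\alpha$-close $Q$ in (b) exists; any larger ratio would break the diagonal injectivity argument, and any smaller would blow up $H(P_x)$ above $\log\log\alpha$. Getting both sides of this trade-off to work---and verifying in particular the feasibility of the simultaneous $Q$ in (b)---is the technical heart of the argument.
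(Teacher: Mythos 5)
Your proposal is correct, and while the hard family is essentially the paper's (peaked sender priors $P_x$ with peak ratio $\approx\alpha^2$ over a domain of size $\approx\alpha^2/\mathrm{polylog}(\alpha)$, with receiver priors flattening the peak to $\approx 1/\alpha$), the way you extract the length lower bound is genuinely different. The paper parameterizes the receiver's prior by a set $S$ of $\approx\alpha$ messages, invokes the min-max principle to pass to deterministic schemes against a random $(m,S)$, and then uses Markov's inequality plus a counting argument to exhibit two messages that share a codeword and lie in $S$ simultaneously with positive probability. You instead prove the stronger pointwise fact that for \emph{every} fixed $R$ and every pair $x\neq x'$ there is a single $Q$ that is $\alpha$-close to both $P_x$ and $P_{x'}$ --- your pinching computation at ratio exactly $\alpha^2$ is right: the two constraint intervals for $Q(x)$ intersect in the single point $c_1/\alpha=\alpha c_2$, and the leftover mass $1-2\alpha c_2$ fits in $[(N-2)c_2/\alpha,(N-2)\alpha c_2]$ once $N\geq\alpha+1$ --- so the diagonal map $x\mapsto E(x,\alpha,R,P_x)$ is injective for every $R$. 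Combined with injectivity of $E(\cdot,\alpha,R,P_x)$ for each fixed $x$ (take $Q=P_x$) and the normalization $c_1+(N-1)c_2=1$, your summation identity gives the $\log N-O(1)$ bound deterministically in $R$, with no min-max step and no probabilistic collision-finding. This buys a cleaner and arguably more elementary argument; what it gives up is that the paper's set-$S$ formulation is the one that generalizes to the positive-error lower bound (their Theorem 5), where one needs many potential collisions inside a random $S$ to make an error \emph{likely} rather than merely possible --- the pairwise common-$Q$ trick does not extend there. Two small corrections: the inequality $\ell(P)\le H(P)$ you invoke is Wyner's bound (Alon--Orlitsky prove the reverse direction), and your choice $N=\alpha^2/\log^3\alpha$ is more conservative than necessary --- $N=\alpha^2/\log^{1+\delta}\alpha$ still gives $H(P_x)=o(1)$ and would improve the loss to $(1+\delta)\log\log\alpha$.
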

\begin{proof}
Given $\alpha$, we will first define a family of priors for the sender and 
receiver that will contain a pair of priors with high redundancy for every 
scheme. Let $k$ be the largest integer such that $k\sqrt{\log k}\leq\alpha$ (so 
$\log\alpha=\log k+\frac{1}{2}\log\log k+O(1)$) and consider a message set of
size $k^2+1$. The family of priors is parameterized by a distinguished message
$m$ and a set $S$ of $k+1$ messages that includes $m$.

For a given $m$ and $S$, Alice's prior $P$ gives message $m$ probability $1-
1/\log k$ and gives the other $k^2$ messages probability $1/k^2\log k$. Bob's
prior $Q$, on the other hand, gives each message in $S$ probability $1/k\sqrt
{\log k}$, and gives the other $k^2-k$ messages uniform probability
$\frac{1}{k^2-k}\left(1-\frac{k+1}{k\sqrt{\log k}}\right)$.
\begin{lemma}\label{pqclose-lem}
For sufficiently large $\alpha$ and $k$, every pair $(P,Q)$ in the family are 
$\alpha$-close.
\end{lemma}
\begin{proof-of-lemma}
We chose $k$ so that $k\sqrt{\log k}\leq\alpha$. 
Since the messages in $S$ have probability $1/k\sqrt{\log k}$ under $Q$, $m$ 
(which has probability between $1$ and $1/k\sqrt{\log k}$ under $P$) is 
$\alpha$-close, and we note that the rest of the messages in $S$
have probability $1/k^2\log k=\frac{1}{k\sqrt{\log k}}\frac{1}{k\sqrt{\log k}}
\geq \frac{1}{\alpha}\frac{1}{k\sqrt{\log k}}$ under $P$, so the rest of these 
messages in $S$ satisfy $\frac{1}{\alpha}Q(x)\leq P(x)\leq Q(x)<\alpha Q(x)$. 
The messages outside $S$ also have probability $1/k^2\log k$ under $P$, which is
less than $\frac{1}{k^2-k}\left(1-\frac{k+1}{k\sqrt{\log k}}\right)$ and 
certainly greater than $\frac{1}{(k^2-k)k\sqrt{\log k}}\left(1-\frac{k+1}{k\sqrt
{\log k}}\right)$ for sufficiently large $k$. Thus, actually, for such $x$ 
outside $S$, again $\frac{1}{\alpha}Q(x)\leq P(x)\leq Q(x)<\alpha Q(x)$.
\end{proof-of-lemma}

\begin{lemma}\label{hp-lem}
Every prior $P$ in the family has entropy $O(1)$ (and hence $\ell(P)$ is $O(1)$
as well).
\end{lemma}
\begin{proof-of-lemma-noqed}
With probability $1-\frac{1}{\log k}$, a draw from $P$ gives the message $m$ 
with self-information $-\log(1-1/\log k)\leq\frac{1}{\log k}$, and gives each of
the $k^2$ messages with self-information $\log(k^2\log k)=2\log k+\log\log k$ 
with probability $\frac{1}{k^2\log k}$. Thus, overall
\begin{align*}
H(P)&\leq (1-\frac{1}{\log k})\frac{1}{\log k}+k^2\frac{1}{k^2\log k}(2\log k+
\log\log k)\\
&\leq \frac{1+\log\log k}{\log k}+2.\ \myqed
\end{align*}
\end{proof-of-lemma-noqed}

It therefore suffices to give a lower bound on the expected codeword length to
obtain a lower bound on the redundancy up to an additive $O(1)$ term, since the
optimal encodings of the messages themselves only contribute $O(1)$ bits.
\begin{lemma}
Any error-free coding scheme must have expected codeword length at least
$2\log k-2\log\log k$ for some $\alpha$-close pair $P$ and $Q$ as described
above.
\end{lemma}
\begin{proof-of-lemma}
We note that by the min-max principle, it suffices to consider deterministic
coding schemes for the case where $m$ and $S$ are chosen uniformly at random
from the common domain of size $k^2+1$. In slightly more detail, we are 
considering a zero-sum game between an ``input'' player and a ``coding scheme'' 
player, in which the pure strategies for the input player are pairs of priors 
$P$ and $Q$ from our family, and the pure strategies for the coding player are 
error-free deterministic coding schemes. The payoff for the input player is the
expected encoding length of the chosen coding scheme on the chosen pair of 
priors. Randomized coding schemes with shared random strings can be viewed as a 
random choice of a deterministic scheme; Von Neumann's Min-max Theorem then
guarantees that the expected encoding length of the best randomized coding 
scheme is equal to the expected encoding length of the best deterministic coding
scheme for an optimal (hardest) distribution over priors. So, by showing that
for {\em some} concrete choice of distribution over priors the expected encoding
length must be at least $2\log k-2\log\log k$, we can infer that for every
randomized coding scheme there must exist a fixed choice of $m$ and $S$ under
which the expected encoding length is at least $2\log k-2\log\log k$, which
will complete the proof of the lemma.

So, suppose for contradiction that the expected codeword length of some
deterministic error-free coding scheme is less than $2\log k-2\log\log k$ when
the parameters $m$ and $S$ defining $P$ and $Q$ are chosen uniformly at random. 
Markov's inequality then guarantees that there is a ``collision''---an ambiguous
codeword for two distinct messages: in more detail, since the probability of
obtaining a codeword of length at least $2\log k-\log\log k$ is at most
\[
\frac{2-2(\log\log k)/\log k}{2-(\log\log k)/\log k}=1-\frac{1}{2\frac{\log k}
{\log\log k}-1}
\]
we find that with probability at least $1\left/\left(2\frac{\log k}{\log\log k}
-1\right)\right.$, the code length is at most $2\log k-\log\log k$. That is,
recalling that we have a uniform distribution over the domain, at least
$\frac{k^2+1}{2\frac{\log k}{\log\log k}-1}$ messages have such short codes.
But, we know that any unique code for so many messages has a codeword of
length at least
\[
\log(k^2+1)-\log\left(2\frac{\log k}{\log\log k}-1\right)-1>2\log k-\log\log k
\]
for sufficiently large $k$. Thus, these messages with short codes cannot have
been uniquely encoded.

So, there must be two messages $m_1$ and $m_2$ that share a codeword and both 
appear in $S$ with positive probability. Conditioned on their both appearing in 
$S$, both of these messages have equal probability of being the message drawn
from $P$ given the common codeword.
Therefore, whatever Bob chooses to output upon receiving his codeword is wrong
with positive probability, contradicting the assumption that the scheme is
error-free.
\end{proof-of-lemma}

\noindent
Now, since our choice of $\alpha$ ($\log\alpha= \log k+\frac{1}{2}\log\log k+
O(1)$ and $\alpha>k$) ensures that $2\log k-2\log\log k$ is at least $2\log
\alpha-3\log\log\alpha-O(1)$, we find that the redundancy is indeed likewise at
least $2\log\alpha-3\log\log\alpha-O(1)$ since, as shown in Lemma~\ref{hp-lem},
$H(P)$ is also $O(1)$.
\end{proof}

We note that apart from the encoding length, this lower bound is essentially 
tight in another respect: If there are fewer than $\alpha^2$ messages, then 
these can be indexed by using less than $2\log\alpha$ bits, and so clearly a 
better scheme is possible by ignoring the priors entirely and simply indexing 
the messages in this case. Thus, no hard example for uncertain priors coding can
use a substantially smaller set of messages.

Although our hard example uses a source distribution with very low entropy, we
also briefly note that it is possible to extend it to examples of hard source 
distributions with high entropy. Namely, suppose that there is a second, 
independent, high-entropy distribution $T$, and that Alice and Bob wish to solve
two independent source coding problems: one with their $\alpha$-close priors $P$
and $Q$, and the other a standard source coding problem for $T$. The joint 
distributions $PT$ and $QT$ are then $\alpha$-close and have entropy essentially
$H(T)$. The analysis is now similar to the Slepian-Wolf Theorem~\cite{sw73}: The
``side problem'' of source coding for $T$ cannot reduce the coding length for 
uncertain priors coding of $P$ and $Q$ since Alice and Bob can simulate drawing 
an independent message $m''$ from $T$ using the shared randomness; this shared 
message $m''$ can then be used by Bob to decode a hash of the message Alice 
would send in the joint coding problem.

\subsection{The price of uncertainty when errors are allowed}
\label{poserror-lb-sec}

We now turn to the setting where Alice and Bob are allowed to fail at the
communication task with positive probability $\epsilon$. Haramaty and 
Sudan~\cite{hs16} introduced this setting, but only considered deterministic
schemes for this problem. 

\subsubsection{Efficient uncertain priors coding when errors are allowed}
We first note that the original techniques from Braverman and Rao~\cite{br14} 
(used in the original work of Juba et al.~\cite{jkks11}) give a potentially much
better upper bound when errors are allowed:
\begin{theorem}\label{poserror-ub}
For every $\alpha$ and $\epsilon$, there is an uncertain priors compression 
scheme with expected code length $H(P)+\log\alpha+\log\frac{1}{\epsilon}+1$ that
is correct with probability $1-\epsilon$ when $P$ and $Q$ are $\alpha$-close.
(Actually, we only require $Q(x)\geq P(x)/\alpha$ for all $x$.)
\end{theorem}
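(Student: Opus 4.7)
The plan is to give a slightly more aggressive variant of the Braverman--Rao/Juba et al.\ prefix scheme: instead of driving the probability of a collision with a ``comparable'' message below $1/\alpha$ (which was needed for error-free decoding), we only drive it below roughly $\epsilon$. Concretely, interpret the shared randomness $R$ as providing, for each message $m'\in M$, an independent infinite uniformly random string $R_{m'}$. To encode $m$, Alice sends the first $i^*(m)$ bits of $R_m$, where
\[
i^*(m) \;=\; \Bigl\lceil \log\tfrac{1}{P(m)} + \log\alpha + \log\tfrac{1}{\epsilon} \Bigr\rceil.
\]
Bob, upon receiving a string $c$ of length $i$, outputs the message $m'$ maximizing $Q(m')$ among those with $R_{m'}[1..i]=c$ (breaking ties by any fixed rule).

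Step 1 is to bound the expected length. Since $|E(m,\alpha,R,P)|=i^*(m)$ depends only on $m$, the ceiling costs at most one extra bit in expectation, so
\[
\bbE_{m\sim P}[i^*(m)] \;\leq\; H(P) + \log\alpha + \log\tfrac{1}{\epsilon} + 1.
\]

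Step 2 is to bound the error probability. Fix the true message $m$. Bob fails only if some $m'\neq m$ satisfies $Q(m')\geq Q(m)$ (or ties and wins the tie-break) and also has $R_{m'}[1..i^*(m)]=R_m[1..i^*(m)]$. The number of candidate $m'$ with $Q(m')\geq Q(m)$ is at most $1/Q(m)$, since their $Q$-probabilities sum to at most $1$; using the hypothesis $Q(m)\geq P(m)/\alpha$, this is at most $\alpha/P(m)$. Each such candidate collides with $m$ on the first $i^*(m)$ bits with probability exactly $2^{-i^*(m)}\leq \epsilon P(m)/\alpha$ (by the choice of $i^*(m)$ and independence of the $R_{m'}$'s). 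A union bound gives
\[
\Pr_R[\text{error}\mid m] \;\leq\; \frac{\alpha}{P(m)} \cdot \frac{\epsilon P(m)}{\alpha} \;=\; \epsilon,
\]
and averaging over $m\sim P$ preserves the bound.

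There is no real obstacle; the whole argument is a one-line union bound, and the only delicate point is making sure Alice can set $i^*(m)$ from her own knowledge alone --- which she can, since $P$ and $\alpha$ are enough to upper-bound $1/Q(m)$ via $Q(m)\geq P(m)/\alpha$. This is exactly the place where the one-sided hypothesis ``$Q(x)\geq P(x)/\alpha$'' (rather than full $\alpha$-closeness) is used, which is why the parenthetical remark in the theorem statement holds. As in the error-free scheme, we rely on the model's convention that Bob receives the codeword as an element of $\{0,1\}^*$ and thus knows its length.
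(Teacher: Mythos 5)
Your proposal is correct and is essentially the paper's own proof: the same scheme (send the first $\lceil\log\frac{\alpha}{P(m)\epsilon}\rceil$ bits of a shared random string $r_m$, decode by maximum $Q$-likelihood among consistent messages), with the same union bound over the at most $1/Q(m)\leq\alpha/P(m)$ candidates of $Q$-probability at least $Q(m)$. The only difference is cosmetic: the paper writes the bound as $(1/Q(m))\cdot\epsilon Q(m)$ while you write it as $(\alpha/P(m))\cdot\epsilon P(m)/\alpha$.
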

\begin{proof}
The scheme is a simple variant of the original uncertain priors coding: we use
the shared randomness to choose infinitely long common random strings $r_m$ for 
each message $m$. Then, to encode a message $m$, Alice sends the first $i=
\left\lceil\log\frac{\alpha}{P(m)\epsilon}\right\rceil$ bits of $r_m$. Bob 
decodes this message by computing the set $S$ of messages $m'$ such that the 
first $i$ bits of $r_{m'}$ agree with the codeword, and outputs some $m'$ that 
maximizes $Q(m')$.

To see that $m'=m$ with probability at least $1-\epsilon$, note that it suffices
to show that of the at most $1/Q(m)$ messages $m'$ with probability at least 
$Q(m)$, no $r_{m'}$ is consistent with the first $i$ bits of $r_m$. Now, since
$Q(m)\geq P(m)/\alpha$, Alice has sent at least $\log 1/Q(m)+\log 
1/\epsilon$ bits of $r_m$. The probability that some $r_{m'}$ agrees with so
many bits of $r_m$ is at most $1/2^{\log 1/Q(m)+\log 1/\epsilon}=\epsilon Q(m)$.
Therefore, by a union bound over the $1/Q(m)$ possible high-likelihood messages 
$m'$, the probability that any has $r_{m'}$ consistent with $r_m$ is at most 
$\epsilon$. Thus, none are consistent and decoding is correct with probability
at least $1-\epsilon$.

Finally, we note that the expected codeword length is exactly
\begin{align*}
\bbE_P\left[\left\lceil\log\frac{\alpha}{P(m)\epsilon}\right\rceil\right]&\leq
\bbE_P\left[\log \frac{1}{P(m)}\right]+\log\alpha+\log\frac{1}{\epsilon}+1\\
&=H(P)+\log\alpha+\log 1/\epsilon+1
\end{align*}
as promised.
\end{proof}

\paragraph{Computational aspects}\label{comp-eff}
We observe that the scheme presented in the proof of Theorem~\ref{poserror-ub}
has another desirable property, namely that the encoding is extremely efficient
in the following sense: Given a message $m$, Alice only needs to compute
$P(m)$ (the density function for $P$ at $m$), and only needs to look up the
prefix of $r_m$ of length $\lceil\log \alpha/P(m)\epsilon\rceil$. This is in
contrast to the error-free encoding scheme of Juba et al.~\cite{jkks11}, which
requires Alice to first identify the messages $m'$ with $P(m')\geq P(m)/
\alpha^2$, and then read up to $\log 1/P(m)+2\log\alpha+2$ bits of $r_{m'}$, 
until some index in which $r_m\neq r_{m'}$, for {\em each} such $m'$. Observe 
that this may be as many as $\alpha^2/P(m)-1$ other messages, which may be quite
large (if the individual densities $P(m)$ are small). So, in comparison, the 
positive-error coding is rather computationally efficient. The natural decoding 
procedure, unfortunately, still also requires examining the density of up to 
$1/Q(m)$ messages. We will return to these issues when we discuss open problems.

\subsubsection{Lower bound for uncertain priors coding with errors}
We now give a lower bound showing that the redundancy is essentially $\log\alpha
+\log 1/\epsilon$, up to terms of size $O(\log\log\alpha)$. Thus, in this 
$\epsilon$-error setting, we also identify the ``price'' of uncertainty $\alpha$
up to lower-order terms. The proof is a natural extension of the proof of 
Theorem~\ref{errorless-lb}, showing that when the codes are so short, a 
miscommunication is not only possible but must be likely.

\begin{theorem}\label{poserror-lb}
For every compression scheme for uncertain priors that is correct with 
probability $1-\epsilon$ and every sufficiently large $\alpha>\frac{1}
{\epsilon}$, there is some pair of $\alpha$-close priors for which the scheme
suffers redundancy at least $\log\alpha+\log\frac{1}{\epsilon}-
\frac{9}{2}\log\log\alpha-O(1)$.
\end{theorem}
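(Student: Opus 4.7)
The plan is to follow the template of Theorem~\ref{errorless-lb} using exactly the same family of priors, parameterized by a distinguished message $m^*$ and a set $S$ of size $k+1$ in a message domain of size $k^2+1$ with $k\sqrt{\log k}\leq\alpha$. Lemmas~\ref{pqclose-lem} and~\ref{hp-lem} already give $\alpha$-closeness and $H(P)=O(1)$ for this family, so it suffices to lower-bound the expected encoding length $L$ itself by $\log\alpha+\log(1/\epsilon)-O(\log\log\alpha)$.

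By the minimax principle I may restrict to deterministic schemes and analyse their behaviour on the uniform distribution over pairs $(m^*,S)$. Because $P_{m^*}$ places probability $1-1/\log k$ on $m^*$, conditioning on the event $m=m^*$ changes the expected length and error probability only by a $1\pm o(1)$ factor, so it suffices to study the deterministic function $e(m^*)$ that Alice emits when her input equals $m^*$. Writing $T_c=e^{-1}(c)$ and $n_c=|T_c|$, Bob's posterior on $m^*$ given $(c,S)$ is uniform on $S\cap T_c$, so his success probability is $\bbE_{m^*,S}[1/|S\cap T_{e(m^*)}|]$ and the error is at least $\frac{1}{2}\Pr[|S\cap T_{e(m^*)}|\geq 2]$.

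A direct hypergeometric calculation gives $\Pr_S[|S\cap T_{e(m^*)}|\geq 2\mid m^*]\geq 1-e^{-(n-1)/k}=:g(n)$ when $n=n_{e(m^*)}$, so the error hypothesis becomes $\bbE_{m^*}[g(n_{e(m^*)})]\leq 2\epsilon+o(1)$. Since $g(n)\geq 1-1/e$ for $n>k$, the total mass of codewords with $n_c>k$ is $O(\epsilon)$; and the bound $g(n)\geq(n-1)/(2k)$ for $n\leq k$ yields $\sum_{c:\,n_c\leq k}n_c(n_c-1)=O(\epsilon k^3)$. Applying Cauchy-Schwarz to the small codewords forces $C_{\mathrm{sm}}\geq\Omega(k/\epsilon)$ distinct codewords to be in use, and bounds the second moment of the induced codeword distribution by $\sum_c p_c^2=O(\epsilon/k)$ whenever $\epsilon$ is not too large.

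The Rényi-2 bound $H(c(m^*))\geq-\log\sum p_c^2\geq\log k+\log(1/\epsilon)-O(1)$, combined with the Alon-Orlitsky one-to-one coding inequality $L\geq H(c(m^*))-O(\log L)$, then gives $L\geq\log k+\log(1/\epsilon)-O(\log\log k)$, which translates via $\log k=\log\alpha-\frac{1}{2}\log\log\alpha-O(1)$ to the desired lower bound on redundancy. The main obstacle will be the regime of moderate $\epsilon$, where the large-codeword contribution to $\sum p_c^2$ makes the Rényi-2 estimate insufficient on its own: there the same conclusion must be recovered by combining the analysis above with a Fano-style conversion of any $\epsilon$-error scheme into an error-free one (paying $O(\epsilon\log N)$ extra bits to retransmit $m$ upon failure) and invoking Theorem~\ref{errorless-lb}, and ensuring that the slack accumulated from these two estimates together stays within the $\frac{9}{2}\log\log\alpha$ stated in the theorem.
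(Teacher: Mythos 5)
Your setup is sound and the core counting constraint you extract is essentially the same one the paper uses: restricting to deterministic schemes via minimax, conditioning on $m=m^*$, and deducing from the $\epsilon$-error hypothesis that $\sum_{c:\,n_c\leq k} n_c(n_c-1)=O(\epsilon k^3)$ is a correct and faithful rendering of the collision bound. The gap is in how you convert ``few collisions'' into ``long expected codeword.'' Your Ren\'{y}i-2/Alon--Orlitsky route inevitably pays a \emph{multiplicative} factor $(1-O(\epsilon))$, because the $O(\epsilon)$ fraction of messages sitting in large codewords ($n_c>k$) may legitimately be assigned $O(1)$-bit codewords; the best you can then conclude is $L\geq(1-O(\epsilon))\bigl(\log k+\log(1/\epsilon)\bigr)-O(\log\log k)$, which falls short of the target by an additive $\Theta(\epsilon\log\alpha)$. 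This exceeds the allowed $O(\log\log\alpha)$ slack whenever $\epsilon=\omega(\log\log\alpha/\log\alpha)$ --- in particular for constant $\epsilon$, arguably the main case of interest. You correctly flag this regime as the obstacle, but the proposed repair does not work: a Fano-style conversion to an error-free scheme requires Alice to know when Bob will fail, and in this one-way model Bob's failure event depends on $S$ (equivalently on $Q$), which Alice does not know and cannot learn. There is no feedback channel, so ``retransmit upon failure'' is not an operation available to the encoder, and Theorem~\ref{errorless-lb} cannot be invoked this way.

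The paper avoids both problems by arguing in the contrapositive with no entropy inequalities at all: assume $\bbE[|c|]\leq\log k+\log(1/\epsilon)-4\log\log k$, apply Markov's inequality to show that a $1/\mathrm{poly}\log$ fraction of the $k^2+1$ messages receive codewords of length at most $\log k+\log(1/\epsilon)-2\log\log k$, note there are only about $2k/(\epsilon\log^2k)$ such codewords, and conclude by convexity that a random message collides with each of the $k$ random elements of $S$ with probability $\Omega(\epsilon\log\log k/k)$; over $k$ independent-enough trials some collision occurs with probability exceeding $2\epsilon$, and by symmetry Bob then errs with conditional probability at least $1/2$, contradicting $\epsilon$-correctness. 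Because this bounds the expected \emph{length} directly via Markov rather than through an entropy of the codeword distribution, the heavy codewords cause no multiplicative loss. To salvage your write-up you would need to replace the Ren\'{y}i-2 step (and the Fano patch) with a direct length argument of this kind; the collision counting you already have is the right ingredient for it.
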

\begin{proof}
For each $\alpha$, we will consider the same family of priors as used in the 
proof of Theorem~\ref{errorless-lb}: that is, we let $k$ be the largest 
integer such that $k\sqrt{\log k}\leq\alpha$ and consider pairs of priors on a
message set of size $k^2+1$, indexed by a distinguished message $m$ and a set
of $k+1$ messages $S$ that includes $m$; Alice's prior $P$ gives $m$ probability
$1-1/\log k$ and the other $k^2$ messages probability $1/k^2\log k$, and Bob's 
prior gives the messages in $S$ probability $1/k\sqrt{\log k}$, and gives the 
other $k^2-k$ messages uniform probability. Lemma~\ref{pqclose-lem} shows that 
every pair $P$ and $Q$ are $\alpha$-close, and Lemma~\ref{hp-lem} shows that 
$H(P)=O(1)$ for every $P$ in the family. It thus again remains only to show that
for every coding scheme that is correct with probability $1-\epsilon$, there
exists a pair of priors in the family under which the expected code length must
be large.
\begin{lemma}
Any coding scheme that is correct with probability at least $1-\epsilon$ must
have expected codeword length at least $\log k+\log\frac{1}{\epsilon}-4\log\log
k$ for some $\alpha$-close pair $P$ and $Q$ as described above.
\end{lemma}
\begin{proof-of-lemma}
We again note that by the min-max principle, it suffices to consider 
deterministic coding schemes when a pair of priors (given by $m$ and $S$) are 
chosen uniformly at random from our family, and consider any scheme in which the
expected codeword length is at most $\log k+\log\frac{1}{\epsilon}-4\log\log k$.
As before, Markov's inequality guarantees that with probability at least $1
\left/\left(\frac{\log k+\log 1/\epsilon}{2\log\log k}-1\right)\right.$, the 
code length is at most $\log k+\log\frac{1}{\epsilon}-2\log\log k$. Now, 
recalling that we have the uniform distribution over the $k^2+1$ messages, this 
means that there are $\ell\geq\frac{k^2+1}{\frac{\log k+\log 1/\epsilon}{2\log
\log k}-1}$ messages with such short codes. But, now there are at most $2\frac
{k}{\epsilon\log^2 k}$ codewords of this length.

We now consider, for a uniformly chosen message from this set (conditioned on
having such a short codeword), the expected number of messages that are coded by
the same codeword. We will refer to each pair that share a codeword as a
``collision.'' Letting $N_c$ denote the number of these messages coded by
the codeword $c$, this is
\[
\frac{1}{\ell}\sum_x\#\{y\mathrm{\ with\ the\ same\ code\ as\ }x\}
=\frac{1}{\ell}\sum_cN_c^2.
\]
Noting that $\sum_cN_c=\ell$, we know that this expression is minimized when
all $N_c$ have equal size. That is, the expected number of collisions in this
conditional distribution is at least $\frac{\ell\epsilon\log^2 k}{2k}$. Since a
uniformly chosen message hits this conditional distribution with probability
at least $1\left/\left(\frac{\log k+\log 1/\epsilon}{2\log\log k}-1\right)
\right.$, overall we have that the expected number of collisions is at least
\[
\ell\frac{\epsilon \log^2 k}{2k}\frac{1}{\frac{\log k+\log 1/\epsilon}{2\log
\log k}-1}\geq \ell\frac{\epsilon\log k}{4k}.
\]

So, consider a message $m^*$ for Alice that is sampled by first choosing a 
random pair $(P,Q)$ from the family of priors and then sampling from $P$. Note 
that regardless of whether or not $m^*$ is in $S$, there are at least $k$ 
(additional) members of $S$ that are chosen uniformly at random, and each one 
collides with $m^*$ with probability at least 
\[
\frac{N_c}{\ell}\frac{1}{\frac{\log k+\log 1/\epsilon}{\log\log k}-1}, 
\]
which is at least $\frac{\epsilon\log\log k}{8k}$. Thus, for sufficiently large 
$k$, the probability that no member of $S$ shares a code with $m^*$ is less than
\[
\left(1-\frac{\epsilon\log\log k}{8k}\right)^k\leq 
1-2\epsilon.
\]
Now, by symmetry of the uniform choice of $m^*$ and the colliding elements of 
$S$, whenever there is at least one collision, Bob's output is wrong with 
probability at least $1/2$. Thus, Bob is wrong in this case with probability 
greater than $\epsilon$, and so the scheme has error greater than $1-\epsilon$.
\end{proof-of-lemma}

\noindent
Since again $\log k\geq\log\alpha-\frac{1}{2}\log\log\alpha-O(1)$ and $\log\log
k\leq\log\log\alpha$, the theorem now follows immediately.
\end{proof}

\section{Suggestions for future work}\label{future-sec}
We now conclude with a handful of natural questions that are unresolved by
our work. First of all, in both the error-free and positive error settings, the
known upper and lower bounds still feature gaps of size $\Theta(\log H(P)+\log
\log\alpha)$. It would be nice to tighten these results further, reducing the 
gap to $\Theta(1)$ if possible in particular (as is known for standard source 
coding). We stress that it may be that {\em neither} the known upper bounds nor 
the known lower bounds are optimal. In particular, since we have considered 
one-shot source coding, we do not need to use prefix codes, and so Kraft's 
inequality does not necessarily apply. It is conceivable that the upper bound 
can be tightened somewhat as a consequence, along the lines of the tightening of
standard source coding achieved by Alon and Orlitsky~\cite{ao94} for example.%
\footnote{We thank Ahmad Beirami for bringing this to our attention.}
We note that there are examples such as uniform distributions under which
the entropy actually matches the optimal encoding length, so the scope for such
improvement is limited to examples such as a geometric distribution in which
there is a significant gap.

Of course, as studied by Shannon~\cite{shannon48}, the prefix-code bounds 
essentially determine the optimal encoding lengths when we wish to transmit more
than one message sampled from the same source. A {\em very} natural second set 
of related questions independently suggested to us by S. Kamath, S. Micali, and 
one of our reviewers concerns the analogous question when multiple messages are 
sampled from the same source, but the sender and receiver are uncertain about
each others' priors. It is clear that if the source is fixed, as the number of
messages grows, the receiver can construct an empirical estimate of the source 
distribution. So in the limit, this obviates the need for schemes such as those 
considered here: standard compression techniques can be used and will achieve 
an optimal coding length (per message) that again approaches the entropy of the 
source. The interesting question concerns what happens when the number of 
messages sent is relatively small, or scales with the uncertainty $\alpha$, or 
relatedly (c.f. our earlier observation that a set of messages of size roughly 
$\alpha^2$ is necessary for our $2\log\alpha$ lower bound to hold), scales with 
the size of the distribution's support. Indeed, after decoding a single message 
(which, with high probability, is the distinguished message $m$), the ``hard'' 
example distributions we constructed subsequently cost $O(1)$ bits per message 
in expectation. One can construct distributions with more ``high-probability'' 
messages in the sender's distribution, but this naturally seems to reduce the 
redundancy since the encodings inherently require more bits. Noting that in this
work we find that redundancy approximately $2\log\alpha$ is necessary when 
$\sim|M|^0=O(1)$ messages are sent and that certainly the redundancy is $O(1)$ 
when $\sim |M|^1$ messages are sent, Kamath~\cite{kamath14} conjectures:
\begin{conjecture}[S. Kamath]
For all $\beta\in [0,1)$, when $\tilde{\Theta}(|M|^\beta)$ messages are sent, 
the optimal per-message redundancy for uncertainty $\alpha$ that is 
$\tilde{O}(|M|^{\frac{1}{2-2\beta}})$ is $(2-2\beta)\log\alpha+\Theta(1)$.
\end{conjecture}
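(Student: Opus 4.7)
The plan is to prove matching upper and lower bounds by extending the techniques of the single-message case. A sanity check on the endpoints already constrains the answer: at $\beta=0$ we have a single message with $|M|=\tilde\Theta(\alpha^2)$ and redundancy $2\log\alpha$, which is precisely Theorem~\ref{errorless-lb}; at $\beta\to 1$ the receiver has $\sim|M|$ samples from $P$ and can empirically reconstruct $P$ well enough that the per-message redundancy collapses to $\Theta(1)$ using standard compression. The conjecture asserts that these two regimes interpolate linearly in $\beta$.

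For the upper bound I would use an adaptive sequential scheme in the spirit of Theorem~\ref{errorless-coding-thm}. At step $k$, Alice and Bob both know the past decodings $m_1,\dots,m_{k-1}$ (Alice because she generated them, Bob because he successfully decoded them); combining them with the $\alpha$-close $Q$ and the shared randomness, they can build a data-dependent refined prior $\tilde Q_k$ that, by a Chernoff/DKW-type concentration estimate on the empirical distribution of the decoded samples, is $\alpha_k$-close to $P$ for some $\alpha_k$ that shrinks as $k$ grows. Invoking Theorem~\ref{errorless-coding-thm} with $\alpha_k$ in place of $\alpha$ costs $H(P)+2\log\alpha_k+O(1)$ bits at step $k$; the goal is to tune the scheme so that $\frac{1}{n}\sum_{k=1}^n 2\log\alpha_k \le (2-2\beta)\log\alpha+O(1)$.

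For the lower bound I would extend the hard family of Theorem~\ref{errorless-lb} by indexing it not by a single distinguished message but by a random substructure: a subset $T\subset M$ of size $t$ chosen to match the regime, together with a distinguished element of $T$. Alice's $P$ again places most of its mass on the distinguished element (so that $H(P)=O(1)$), with a small tail on $T$; Bob's $Q$ is uniform on an $\alpha$-dilated superset of $T$, so Lemma~\ref{pqclose-lem}-style calculations give $\alpha$-closeness. After reducing to deterministic schemes against the uniform distribution on the family parameters via the min-max principle, a Markov inequality plus a collision/pigeonhole count analogous to the proof of Theorem~\ref{errorless-lb}, now applied jointly to the $n$ codewords, should force any scheme with per-message length below $(2-2\beta)\log\alpha-O(\log\log\alpha)$ to err on a positive-probability event.

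The main obstacle is matching the exact coefficient $(2-2\beta)$ on both sides simultaneously. My straightforward accounting above tends to produce a coefficient like $2(1-\beta)^2$ instead: on the coding side because each decoded message reveals only $\log t$ bits of structure about $P$ so $\alpha_k$ shrinks too slowly, and on the lower-bound side because the one-time cost of identifying Alice's random substructure $T$ amortizes too cheaply across $n$ messages. Closing this gap likely requires either batching the messages and transmitting an explicit coarse sketch of $P$ before the bulk, or designing a hard instance in which a constant fraction of every sample's encoding cost provably resists amortization; this tension is, I suspect, the crux of what makes the conjecture nontrivial.
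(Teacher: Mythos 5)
The statement you are attempting is presented in the paper as an open conjecture (attributed to S.~Kamath); the paper offers no proof of it, so there is nothing to compare your approach against, and your text should be judged on its own as a proposed resolution. As such, it is a research plan rather than a proof, and by your own admission it does not establish the claimed bound: your accounting yields a coefficient of $2(1-\beta)^2$ rather than $2-2\beta$ on both sides, and the conjecture's entire content is that precise coefficient (the endpoints $\beta=0$ and $\beta\to 1$, which you verify, are already known and are what motivated the conjectured interpolation in the first place). Flagging the crux is honest, but it is not a proof of the crux.

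Beyond the acknowledged coefficient gap, there are concrete reasons to doubt both halves of the plan as stated. For the upper bound, the refinement of $\tilde Q_k$ from decoded samples cannot work the way you describe on instances like the paper's hard family: with $n=\tilde\Theta(|M|^\beta)$ samples one can only estimate probabilities down to roughly $1/n$, whereas the messages on which $\alpha$-closeness actually bites have mass far below that (e.g.\ $1/k^2\log k$), so a DKW-type bound gives essentially no shrinkage of $\alpha_k$ on the relevant part of the support; whatever mechanism drives the per-message cost down must exploit the support size $|M|$ directly (as the conjecture's constraint $\alpha=\tilde O(|M|^{1/(2-2\beta)})$ suggests), not empirical estimation of $P$. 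For the lower bound, the paper itself observes that its hard distributions cost only $O(1)$ bits per message after the first successful decoding, since $P$ is fixed across the $n$ transmissions and the one-time identification cost amortizes away; your proposed fix of a larger random substructure $T$ faces exactly the same amortization, and you do not specify a construction in which a $\Theta((2-2\beta)\log\alpha)$ cost provably recurs for each of the $n$ messages. Until one of these two obstacles is actually overcome, the conjecture remains open.
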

Finally, we note that since these bounds primarily concern small numbers of
messages in a noninteractive setting, questions about the gap between one-to-one
and prefix coding still arise. Analogous to the source coding of a single 
message, it is conceivable that we could achieve similar savings along the lines
obtained by Szpankowski and Verd\'{u}~\cite{sv11} for source coding with 
multiple messages. Similar bounds on the savings for universal compression (with
unknown source distributions) have also been established in works by Kosut and
Sankar~\cite{ks14} and Beirami and Fekri~\cite{bf14}.

Third, we do not have any better lower bounds for the deterministic or 
imperfect-randomness settings, respectively from Haramaty and Sudan~\cite{hs16} 
and Canonne et al.~\cite{cgms14}. The known upper bounds in these settings are 
{\em much} weaker, featuring redundancy that grows at least linearly in the 
entropy of the source distribution, and in the case of the deterministic codes 
of Haramaty and Sudan, some kind of dependence on the size of the distribution's
support. Is this inherently necessary? Haramaty and Sudan give some reasons to 
think so, noting a connection between graph coloring and uncertain priors coding
schemes: namely, they identify geometric distributions with the nodes of a graph
and include edges between all $\alpha$-close geometric distributions. They 
identify the ``colors'' with the messages sent by the high-probability element 
in each distribution. They then point out that a randomized uncertain priors
scheme is essentially a ``fractional coloring'' of this graph, whereas a
deterministic scheme is a standard coloring. Thus, there is reason to suspect
that this problem may be substantially harder, and so a stronger lower bound
(e.g., depending on the size of $M$) may be possible for the deterministic
coding problem.

As for the imperfectly shared randomness setting, Canonne et al.~\cite{cgms14}
give some lower bounds for a communication complexity task (sparse gap inner
product) showing that imperfectly shared randomness may require substantially
more communication than perfectly shared randomness. Although this does not
seem to immediately yield a strong lower bound for the uncertain priors problem
(as it is for a very specific communication problem), their technique may
provide a starting point for a matching lower bound on the redundancy in that 
setting as well.

Finally, none of these works address the question of the {\em computational}
complexity of uncertain priors coding, a question originally raised in the
work of Juba et al.~\cite{jkks11}. In particular, codes such as arithmetic 
coding (first considered by Elias in an unpublished work) can encode a message 
$m$ by making a number of queries to a CDF for the source distribution $P$ that 
is {\em linear in the code length}, and has similar computational complexity. 
Although the code length is not as tight as Huffman coding~\cite {huffman52} for
one-shot coding, it is nearly optimal. The question is whether or not a 
similarly computationally efficient, near-optimal compression scheme for 
uncertain priors coding exists. We observed in Section~\ref{comp-eff} that our
encoding scheme for the positive error setting was both quite efficient, given
the ability to look up the desired prefix of the random encoding efficiently, 
and furthermore was close to optimal in encoding length. But, the natural
decoding strategy is quite inefficient in terms of the number of queries to the
density function required in particular: it requires up to $1/Q(m)$ queries to
decode a message $m$, where $Q$ is the receiver's density function. In the case 
of error-free coding, the decoding strategy is the same (and hence just as
inefficient), and moreover, the known encoding strategies are similarly
inefficient. Thus, the use of either scheme at present is computationally
prohibitive, which is a barrier to their use.


\end{document}